\newcommand{\tr}[1]{\,\mathrm{tr}\left\lbrace  #1 \right\rbrace}
\newcommand{\ket}[1]{\vert#1\rangle}
\newcommand{\bra}[1]{\langle #1\vert}
\newtheorem{theorem}{Theorem}
\newtheorem{lemma}[theorem]{Lemma}
\newenvironment{proof}[1][Proof]{\begin{trivlist}
\item[\hskip \labelsep {\bfseries #1}]}{\end{trivlist}}
\newcommand{\qed}{\nobreak \ifvmode \relax \else
      \ifdim\lastskip<1.5em \hskip-\lastskip
      \hskip1.5em plus0em minus0.5em \fi \nobreak
      \vrule height0.75em width0.5em depth0.25em\fi}
\begin{document}

\title{Hallmarking quantum states: unified framework for coherence and correlations}
\date{\today}
\author{Gian Luca Giorgi}
\affiliation{  Instituto de
F\'{\i}sica Interdisciplinar y Sistemas Complejos IFISC (UIB-CSIC), UIB Campus,
E-07122 Palma de Mallorca, Spain}
\email{gianluca@ifisc.uib-csic.es}
\author{Roberta Zambrini}
\affiliation{  Instituto de
F\'{\i}sica Interdisciplinar y Sistemas Complejos IFISC (UIB-CSIC), UIB Campus,
E-07122 Palma de Mallorca, Spain}

\maketitle

\begin{abstract}
 Quantum coherence and distributed correlations among subparties are often considered as separate, although operationally linked to each other, properties of a quantum state.  Here, we propose a measure able to quantify  the contributions derived by both the tensor structure of the multipartite Hilbert space and the presence of coherence inside each of the subparties. Our results hold for any number of partitions of the Hilbert space.  Within this unified framework,  global coherence of the state is identified as the ingredient responsible for the presence of distributed quantum correlations, while  local coherence also contributes to the quantumness of the state. A new quantifier, the ``hookup", is introduced within such a framework. We also provide a simple physical interpretation, in terms of coherence, of the  difference between total correlations and the sum of classical and quantum correlations obtained using relative-entropy--based quantifiers. 
\end{abstract}

\section{Introduction}
The superposition principle is one of the axioms and most distinctive features 
of quantum mechanics and it is responsible for the presence of coherence in 
quantum states  \cite{vn}.  When it comes to multipartite scenarios, the 
superposition principle is still the ingredient that makes quantum states 
intimately different from classical states and allows them to show inner 
correlations beyond any  classical probabilistic model 
\cite{horodecki,modirmp,adesso}.  

In the area of quantum information and quantum computation, there are many 
subfields where coherence and distributed correlations are seen as separate 
resources that enable quantum supremacy.   For instance, let us consider the  deterministic quantum computing with one qubit (DQC1) protocol, introduced by Knill and Laflamme   \cite{dqc1}. It was argued in Ref. 
\cite{datta} that the quantum computational advantage may be due to the 
production of quantum discord. More recently, it was shown that any discord 
quantifier is a witness for recoverable coherence, whose presence is necessary 
for the protocol to be successful \cite{matera}. As a second example, a  famous 
applications where the operational equivalence between local coherence and 
bipartite entanglement is quantum cryptography \cite{cryp}. Indeed, the original 
BB84 key distribution scheme, which makes use of ordinary
single-particle states in noncommuting bases \cite{bb84}, was proved to  have 
the same security bounds as  Ekert's scheme,  which relies on the use of 
maximally entangled states \cite{ekert}.

Despite the fact that coherence is one of the most distinctive traits of quantum 
mechanics, its characterization and quantification have only very recently 
become an intense field of investigation \cite{coherence}. In analogy to what has been 
done in different contexts, such as entanglement theory  
(\cite{horodecki,brandao,pleniovedral}) or quantum thermodynamics 
(\cite{gour,goold}), a resource theory for quantum coherence was proposed in 
Refs. \cite{aberg,levi,baumgratz,winter,misra,prx}. Coherence has also been 
linked to asymmetry \cite{asym1,asym2,asym3} and purity \cite{purity}.
The interplay between quantum correlations and coherence is mainly studied 
focusing on either the problem of interconversion between the two classes of 
resources \cite{streltsov2015,chimbatar,killoran,ma,qiao} or the distribution of 
coherence among subparties and its monogamy properties 
\cite{Hu,radha,yao,kumar,tan,kraft}.

Given the fact that quantum correlations and coherences are both useful resources in quantum information and computing, it appears very relevant to find a way to completely characterize
the overall computational power  of a  state exhibiting both aspects of quantumness through the introduction of a global quantifier (we will term it quantum hookup), which is the main scope of our paper. 
The way of pursuing this goal is represented by the introduction of a  unified framework  within which the full character of a quantum state belonging to a multipartite Hilbert space can be  determined by both local and collective properties. 

A unified framework was introduced in Ref. \cite{modi} trying to distinguish between total, classical, and quantum correlations (and also entanglement) as complementary parts of the same theoretical structure. 
In analogy to that approach, here we propose to merge the total power (the hookup) together with correlations and coherence in a consistent way. Within our framework, the concepts of classicality and quantumness need to be revisited.
 The founding observation to build the framework is that both coherence and correlations can be 
measured using the same kind of quantifier. 
  We will make use of the quantum relative 
entropy, but different metrics, such as the $l_1$-norm could be  be introduced 
as well  \cite{baumgratz}. As we will show, not only is our scheme useful to give a
comprehensive approach to quantumness through coherence, but it also allows one to explain the incongruent lack of closure  emerging in the framework of Ref. \cite{modi}, showing the indissoluble connexion between local and global quantum effects.

\section{Definitions}

Our starting point for quantifying correlations is the geometric scheme 
presented in Ref. \cite{modi}. All the distances between pairs of states $\rho$ 
and $\sigma$ are measured by the quantum relative entropy 
$S(\rho\vert\vert\sigma)=-\tr{\rho\log\sigma}-S(\rho)$, where 
$S(\rho)=-\tr{\rho\log\rho}$ is the von Neumann entropy of $\rho$. The relative 
entropy, in spite of its lack of symmetry,  is commonly used and accepted as a 
distance measure in different contexts, as it fulfils a series of important 
requirements \cite{relent}: it is a positive definite directed measure of the distance between two states;
it is contractive under completely positive and trace-preserving maps; its 
explicit calculation is feasible in various common scenarios; it avoids possible inconsistencies in the definition of discord \cite{piani,bruno}. 

Within this approach, the various kinds of correlations present in a quantum state are quantified by the distance between the state itself and the closest state without the desired property.
Thus, the total correlations of a multipartite state $\varrho\equiv\varrho_{A_1,A_2,\dots,A_n}$ are given by the distance from  the closest product state. 
Total correlations are quantified by the total mutual information of that state: 
\begin{equation}
{\cal T}(\varrho)=S(\varrho\vert\vert\pmb{ \pi}[\varrho])=S(\pmb{ \pi}[\varrho])-S(\varrho), 
\end{equation}
with $\pmb{ \pi}[\varrho]=\pmb{ \pi}_{A_1}\otimes\dots\otimes \pmb{ \pi}_{A_n}$ where $  \pmb{ \pi}_{A_i}$ is obtained from $\varrho$ calculating the partial trace over all the partitions with the exception of the $i$th. 
The quantum part of these correlations is measured by the (two-sided, relative entropy of) quantum discord  \cite{modi}
\begin{equation}
{\cal D}(\varrho)=S(\varrho\vert\vert \chi_\varrho)=S(\chi_\varrho)-S(\varrho),
\end{equation} 
where $ \chi_\varrho$ is the classically correlated state closest to $\varrho$ and where classically correlated states are separable in the $n$-partite Hilbert space:
\begin{equation}\label{basis}
\chi=\sum_{\vec{k}}p_{\vec{k}}\vert \vec{k}\rangle\langle \vec{k}\vert
\text{, with } \ket{\vec{k}}=\ket{k_1}\otimes\cdots\otimes \ket{k_n}.
\end{equation} 

The basis is given by product states, as the use of a nonlocal entangled basis would somehow hide the quantumness of the states within the basis itself.     
In turn, classical correlations of $\varrho$ (and equivalently of $\chi_\varrho$) are given by   
\begin{equation}
{\cal J}(\varrho)=S(\chi_\varrho\vert\vert  \pmb{ \pi}[\chi_\varrho])=S( \pmb{ \pi}[\chi_\varrho])-S(\chi_\varrho),
\end{equation}
 where $ \pmb{ \pi}[\chi_\varrho]$ is the product (uncorrelated) state closest to $\chi_\varrho$ \cite{modi}. 
Within this treatment, an  incongruity comes out, as in general the sum of classical plus quantum correlations exceeds the total correlations: ${\cal T}(\varrho)\le {\cal D}(\varrho)+{\cal J}(\varrho)$, with the equality holding only in some special cases. In fact, we have  \cite{modi}
\begin{equation}
L(\varrho)\equiv {\cal D}(\varrho)+{\cal J}(\varrho)-{\cal T}(\varrho)=S( \pmb{ \pi}[\varrho]\vert\vert \pmb{ \pi}[\chi_\varrho]).\label{elle}
\end{equation}
As we shall see later, we are able to give a physical interpretation for $L(\varrho)$  in terms of the quantum coherence of the local sub-parties. 

Here, it is worth remarking that, in this context, $ \chi_\varrho$ is commonly 
referred as ``the  classical state closest to  $\varrho$''
\cite{modi,vedral,luo}. Within a unified framework, $ \chi_\varrho$ more specifically identifies a 
\textit{classically correlated} state that can be coherent or not. Indeed there is only one 
special basis where its coherence vanishes and where it can be seen as a fully 
classical entity. Otherwise, a state exhibiting nonvanishing nondiagonal density-matrix 
elements can hardly be considered as a classical one but could be classically correlated. 

In analogy with correlations, also coherence can be quantified using the relative entropy \cite{baumgratz,singh}. 
The relative entropy of coherence of a state $\varrho$ with respect to a basis is defined as 
\begin{equation}
{\cal C}(\varrho)=\min_{\sigma \in {\cal I}} S(\varrho\vert\vert\sigma),\label{eqc}
\end{equation}
where $ {\cal I}$ is the set of totally incoherent (diagonal) states  in that basis. 
In general this definition is independent of possible partitions of the Hilbert space and, even for multipartite systems the basis could be local or nonlocal.
It turns out that  $
{\cal C}(\varrho)=S(\varrho\vert\vert{\bf \Delta}[\varrho])=S({\bf \Delta}[\varrho])-S(\varrho)$,
where  ${\bf \Delta}$ is the full decohering operation, leading to a state with all the non-diagonal elements of $\varrho$ set to zero \cite{baumgratz}. 
The dependence on the basis choice will not explicitly appear in the notation of ${\cal C},\varrho$ or other coherence related indicators. 
Henceforth, unless specifically indicated,  states will always be represented in product bases as in (\ref{basis}).
In particular, ${\bf \Delta}[\varrho]=\sum\ket{\vec{k}}\bra{\vec{k}}\varrho\ket{\vec{k}}\bra{\vec{k}}$. 

\section{Results}

\begin{figure}[t]
  \centering
   \includegraphics[scale=.5]{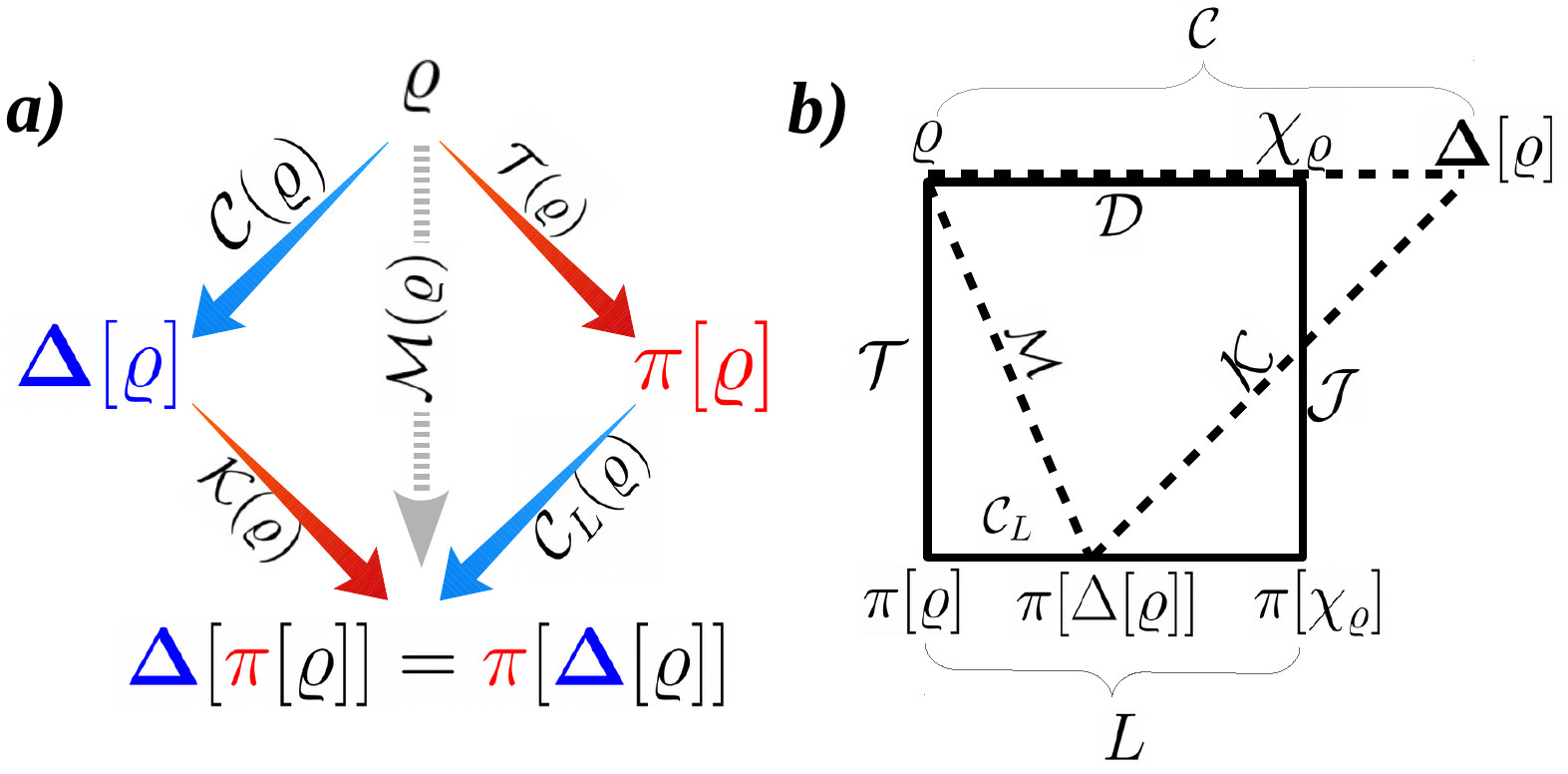}\\
  \caption{
(a) The unified framework of coherence and correlations: from any initial state $\varrho$, the closest useless state can be reached either by applying first the dephasing operation ${\bf \Delta}$ (blue) and then the product operation $\pmb{ \pi}$ (red) or the other way around. Depending on the chosen path, the total hookup ${\cal M}$ can be decomposed either into the sum of the total mutual information and the local coherence or into the sum of the total coherence and the irreducible classical information.  
  (b)  
  Comparison between the correlation scheme of Ref. \cite{modi} (solid lines) and the unified framework of coherence and correlations (dashed lines). While the discord is a lower bound for the total coherence (that is, the distance between $\varrho$ and $\chi_{\varrho}$ is always smaller than the one between $\varrho$ and ${\bf \Delta}[\varrho]$), ${\cal K}$ can be longer or shorter than ${\cal J}$ and ${\cal C}_L$ can be longer or shorter than $L$.  
    }\label{Fig1}
\end{figure}
As mentioned in the introduction, the distribution of coherence in multipartite settings has been subject of recent interest \cite{radha,yao,kumar,tan}. Building on these results, it is useful to introduce the concept of local coherence that is present in the state $\pmb{ \pi}(\varrho)$ to be distinguished from genuine multipartite effects \cite{radha,tan}. In the framework of relative entropy, the local coherence of a state $\varrho_{A_1,A_2,\dots,A_n}$ is the sum of the coherences of the reduced states  $ \pmb{ \pi}_{A_i}$: 
\begin{equation}
{\cal C}_L(\varrho)=\sum_{i=1}^n {\cal C}( \pmb{ \pi}_{A_i}),\label{lc}
\end{equation}
with ${\cal C}$ defined in Eq. (\ref{eqc}).
Using the additivity of the relative entropy, it can be shown that the following equality  holds: 
\begin{equation}
{\cal C}_L(\varrho)=S(\pmb{ \pi}[\varrho]\vert\vert {\bf \Delta}[ \pmb{ \pi}[\varrho]]).\label{lc2}
\end{equation}
In the same framework, a measure for the genuinely multipartite contribution to coherence can  be introduced by subtracting the contribution of local terms (see also Ref. \cite{maziero}): 
\begin{equation}
{\cal C}_M(\varrho)={\cal C}(\varrho)-{\cal C}_L(\varrho).
\end{equation}

 It can be shown that ${\cal C}_M(\varrho)$ is a nonnegative quantity,
and, for this purpose we anticipate the following [see also the illustration in Fig. \ref{Fig1}(a)]: 
 \begin{lemma}
The total dephasing operation ${\bf \Delta}[\cdot]$ commutes with $ \pmb{ \pi}[\cdot]$, that is, ${\bf \Delta}[ \pmb{ \pi}[\varrho]]= \pmb{ \pi}[{\bf \Delta}[\varrho]]$.
\end{lemma}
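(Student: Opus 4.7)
The plan is to verify the identity by decomposing ${\bf \Delta}$ along the same product structure that already underlies $\pmb{\pi}$. First I would note that, because the dephasing is carried out in the product basis $\{\ket{\vec{k}}\}$ with $\ket{\vec{k}}=\ket{k_1}\otimes\cdots\otimes\ket{k_n}$, the full decohering channel factorises as a tensor product of local dephasings, ${\bf \Delta}=\Delta_1\otimes\cdots\otimes\Delta_n$, where $\Delta_i$ is the dephasing of subsystem $A_i$ in its local basis $\{\ket{k_i}\}$.

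Next I would establish two elementary ingredients. (i) For $i\neq j$, $\Delta_j$ commutes with the partial trace $\mathrm{tr}_i$, since the two operations act on disjoint tensor factors. (ii) The partial trace over $A_i$ absorbs $\Delta_i$, that is, $\mathrm{tr}_i\bigl[\Delta_i[\sigma]\bigr]=\mathrm{tr}_i[\sigma]$, because $\Delta_i$ keeps precisely the diagonal entries in $\{\ket{k_i}\}$ and the partial trace sums those very entries. Combining (i) and (ii) and tracing out every subsystem except the $i$-th, one obtains the per-site identity $\pmb{\pi}_{A_i}\bigl[{\bf \Delta}[\varrho]\bigr]=\Delta_i\bigl[\pmb{\pi}_{A_i}[\varrho]\bigr]$ for every index $i$.

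To conclude, I would take the tensor product of these equalities. On one side, by the very definition of $\pmb{\pi}$ and by the previous step, $\pmb{\pi}\bigl[{\bf \Delta}[\varrho]\bigr]=\bigotimes_i \pmb{\pi}_{A_i}\bigl[{\bf \Delta}[\varrho]\bigr]=\bigotimes_i \Delta_i\bigl[\pmb{\pi}_{A_i}[\varrho]\bigr]$; on the other side, since ${\bf \Delta}$ factorises on a product state, ${\bf \Delta}\bigl[\pmb{\pi}[\varrho]\bigr]=\bigotimes_i \Delta_i\bigl[\pmb{\pi}_{A_i}[\varrho]\bigr]$. The two expressions coincide, which proves the lemma. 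The only point that needs mild care — and which I would flag as the sole conceptual subtlety rather than a real obstacle — is that $\pmb{\pi}$ is a \emph{nonlinear}, state-dependent map and not a genuine quantum channel, so one cannot invoke a generic ``composition of channels'' argument; the commutation must instead be extracted factor by factor, as above, after which the remainder is routine tensor-product bookkeeping.
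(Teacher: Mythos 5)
Your proof is correct, but it follows a more structural route than the paper's. The paper simply writes a generic bipartite state as $\varrho=\sum_{i,j,k,l}c_{i,j,k,l}\ket{i,j}\bra{k,l}$ and computes the matrix elements of ${\bf \Delta}[\pmb{\pi}[\varrho]]$ and $\pmb{\pi}[{\bf \Delta}[\varrho]]$ explicitly, finding both equal to $\sum_{i,j}c_{i,j,i,j}\ket{i}\bra{i}\otimes\sum_{i,j}c_{i,j,i,j}\ket{j}\bra{j}$, and then asserts that the $n$-partite case is identical. You instead isolate the two facts that make the index computation work --- the factorisation ${\bf \Delta}=\Delta_1\otimes\cdots\otimes\Delta_n$ of the global dephasing in a product basis, together with the identities $\mathrm{tr}_i\circ\Delta_j=\Delta_j\circ\mathrm{tr}_i$ for $i\neq j$ and $\mathrm{tr}_i\circ\Delta_i=\mathrm{tr}_i$ --- and assemble the per-site relation $\pmb{\pi}_{A_i}[{\bf \Delta}[\varrho]]=\Delta_i[\pmb{\pi}_{A_i}[\varrho]]$ before tensoring. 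Both arguments are sound; yours buys a genuinely $n$-partite proof with no ``the general case is identical'' step and makes visible exactly which properties of dephasing and partial trace are being used, while the paper's is shorter and entirely self-contained at the level of coefficients. Your closing caveat that $\pmb{\pi}$ is a nonlinear, state-dependent assignment rather than a bona fide channel, so that the commutation cannot be deduced from generic channel algebra, is well taken and is implicitly why both proofs must proceed factor by factor (or element by element).
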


\begin{proof}
The proof can be given by  calculating explicitly the matrix elements of the two operators. Given a generic bipartite state (the proof is identical irrespective of the number of parties) $\varrho=\sum_{i,j,k,l}c_{i,j,k,l}\vert i,j\rangle\langle k,l\vert$, we have ${\bf \Delta}[\varrho]=\sum_{i,j}c_{i,j,i,j}\vert i,j\rangle\langle i,j\vert$ and $ \pmb{ \pi}[\varrho]=\sum_{i,j,k}c_{i,j,k,j}\vert i\rangle\langle k\vert \otimes \sum_{i,j,l}c_{i,j,i,l}\vert j\rangle\langle l\vert$. Thus, 
 for both operators we have ${\bf \Delta}[ \pmb{ \pi}[\varrho]]= \pmb{ \pi}[{\bf \Delta}[\varrho]]=\sum_{i,j}c_{i,j,i,j}\vert i\rangle\langle i\vert\otimes \sum_{i,j}c_{i,j,i,j}\vert j\rangle\langle j\vert$.\qed
\end{proof}

 The nonnegativity of ${\cal C}_M(\varrho)$, that is, the hierarchical relationship ${\cal C}(\varrho)\ge{\cal C}_L(\varrho)$, can be proved using the fact that ${\cal C}(\varrho)$ is the relative entropy between two states [$\varrho$ and $\sigma$ in (\ref{eqc})], while ${\cal C}_L(\varrho)$ is the relative entropy between two new states obtained from the previous ones by applying the quantum operation $ \pmb{ \pi}$ and by using  \textbf{Lemma 1} in Eq. (\ref{lc2}). We have
 \begin{equation}
 {\cal C}_M(\varrho)=S(\varrho\vert\vert{\bf \Delta}[\varrho])-S( \pmb{ \pi}[\varrho]\vert\vert \pmb{ \pi}[{\bf \Delta}[\varrho]]).\label{eqcma}
 \end{equation}
 As relative entropy is known not to increase under any completely positive, trace preserving (CPTP) quantum operation ($ \pmb{ \pi}$ in our case), ${\cal C}_{M}$ cannot be negative.  ${\cal C}_M(\varrho)$ also amounts to the difference between the total mutual information of $\varrho$ and the total mutual information of ${\bf \Delta}[\varrho]$. Indeed, ${\cal C}_M(\varrho)=S({\bf \Delta}[\varrho])-S(\varrho)-S( {\bf \Delta}[ \pmb{ \pi}[\varrho]])+S(\pmb{ \pi}[\varrho])$, which can be recombined as ${\cal C}_M(\varrho)=S(\varrho\vert\vert \pmb{ \pi}[\varrho])-S({\bf \Delta}[\varrho]\vert\vert{\bf \Delta}[ \pmb{ \pi}[\varrho]] )$ or
 \begin{equation}\label{eqcm}
 {\cal C}_M(\varrho)={\cal T}(\varrho)-{\cal T}({\bf \Delta}[\varrho]).
 \end{equation}

Interestingly, total correlations, measured by quantum mutual information, have a deep operational interpretation, 
being related to the work required to erase such correlations, that is, to convert a correlated state into a product one \cite{groisman}. 
This result was the generalization to quantum information of Landauer's theory of thermodynamics and information erasure \cite{landauer}. 
On the other hand, coherence has also been shown to be a resource in thermodynamical processes \cite{coherence}. 
Apart from being related to the work extraction problem \cite{korzekwa}, quantum coherence is at the root of fundamental issues, as, for example,  
irreversibility \cite{lostaglio}. 
Given these different roles played by quantum correlations and coherences, a legitimate question arises about 
the global value of a generic quantum state displaying both. 
To answer this question it is necessary first to introduce a consistent unified framework within which  both the  correlations and the local quantum character of 
a state are taken into account and this is our aim here.

A unified framework for coherence and correlations can be built   identifying the class of states that are operationally useless, which is represented by the set of incoherent product states $ {\cal  \bar{I}}:\lbrace \rho \;\;{\rm s. t.}\; \; \rho=    \pmb{ \pi}[\rho]={\bf \Delta}[ \rho]  \rbrace$.
 Then, a meaningful measure for a state $\varrho $ is represented by its distance from its closest incoherent product state:
 \begin{equation}
{\cal M}(\varrho)=\min_{\sigma \in  {\cal  \bar{I}} }S(\varrho\vert\vert \sigma ).\label{eqm}
\end{equation}
We term ${\cal M}$ hookup, as it combines both quantum coherences and correlations, being related to the amount of noise necessary to erase both of them. 
\begin{theorem}
Given a state $\varrho$, its closest incoherent product state is obtained by applying the dephasing operation to  $\pmb{ \pi}[\varrho]$:
\begin{equation}
{\cal M}(\varrho)=S(\varrho\vert\vert {\bf \Delta}[ \pmb{ \pi}[\varrho] ]).
\end{equation}
\end{theorem}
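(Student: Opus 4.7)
The plan is to parametrize an arbitrary target $\sigma\in{\cal \bar I}$ and massage $S(\varrho\|\sigma)$ into a sum of two pieces: one independent of $\sigma$ and one that is minimized by an obvious candidate. First I would write any $\sigma\in{\cal \bar I}$ as $\sigma=\sigma_{A_1}\otimes\cdots\otimes \sigma_{A_n}$ with each $\sigma_{A_i}$ diagonal in the local basis chosen to define ${\bf \Delta}$. Since $\log\sigma=\sum_i \mathbb{1}\otimes\cdots\otimes\log\sigma_{A_i}\otimes\cdots$ is diagonal in the product basis $\{\ket{\vec{k}}\}$, the off-diagonal matrix elements of $\varrho$ do not contribute to $\tr{\varrho\log\sigma}$, so $\tr{\varrho\log\sigma}=\tr{{\bf \Delta}[\varrho]\log\sigma}$.

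Next I would use this to rewrite the relative entropy. Adding and subtracting $S({\bf \Delta}[\varrho])$ gives
\begin{equation}
S(\varrho\|\sigma)=\big[S({\bf \Delta}[\varrho])-S(\varrho)\big]+S({\bf \Delta}[\varrho]\|\sigma)={\cal C}(\varrho)+S({\bf \Delta}[\varrho]\|\sigma),
\end{equation}
where the first bracket is precisely the total coherence of $\varrho$, independent of the choice of $\sigma$. Hence minimizing $S(\varrho\|\sigma)$ over ${\cal \bar I}$ reduces to minimizing $S({\bf \Delta}[\varrho]\|\sigma)$ over product incoherent states.

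Since ${\bf \Delta}[\varrho]$ is itself incoherent, this minimization is the classical problem of approximating a joint probability distribution by a product one under the (Kullback--Leibler) relative entropy. The well-known minimizer is the tensor product of the marginals of ${\bf \Delta}[\varrho]$, i.e.\ $\pmb{\pi}[{\bf \Delta}[\varrho]]$; this follows directly by expanding $S({\bf \Delta}[\varrho]\|\pmb{\pi}[{\bf \Delta}[\varrho]]\otimes R)$ for $R=\sigma/\pmb{\pi}[{\bf \Delta}[\varrho]]$ (or by invoking the nonnegativity of the mutual information of ${\bf \Delta}[\varrho]$ with respect to any product reference). Invoking Lemma 1 I would then identify this optimum with ${\bf \Delta}[\pmb{\pi}[\varrho]]$, which proves the theorem and in passing yields the clean decomposition ${\cal M}(\varrho)={\cal C}(\varrho)+{\cal T}({\bf \Delta}[\varrho])$ anticipated by Fig.~\ref{Fig1}(a).

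The only delicate point I anticipate is step three: justifying that among product incoherent states the minimizer of $S({\bf \Delta}[\varrho]\|\sigma)$ is the product of marginals. The safest route is a direct computation, writing $\sigma=\bigotimes_i \sigma_{A_i}$ and using $\log\sigma=\sum_i \log\sigma_{A_i}$ to obtain $S({\bf \Delta}[\varrho]\|\sigma)=-S({\bf \Delta}[\varrho])-\sum_i \tr{\pmb{\pi}_{A_i}[{\bf \Delta}[\varrho]]\log\sigma_{A_i}}$, and then noting that each single-factor term is minimized when $\sigma_{A_i}=\pmb{\pi}_{A_i}[{\bf \Delta}[\varrho]]$ by the nonnegativity of the one-variable relative entropy. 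Everything else is algebraic bookkeeping.
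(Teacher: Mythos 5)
Your proposal is correct and follows essentially the same route as the paper: both split $S(\varrho\|\sigma)={\cal C}(\varrho)+S({\bf \Delta}[\varrho]\|\sigma)$ for incoherent $\sigma$, reduce the problem to finding the product state closest to ${\bf \Delta}[\varrho]$, and use Lemma~1 to identify $\pmb{\pi}[{\bf \Delta}[\varrho]]$ with ${\bf \Delta}[\pmb{\pi}[\varrho]]$. The only difference is that you prove from scratch the two ingredients the paper simply cites (the diagonality argument behind the decomposition and the product-of-marginals optimality from Ref.~\cite{modi}), which makes your version more self-contained but not a different proof.
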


 \begin{proof}
  The relative entropy between $\varrho$ and any incoherent state $\sigma$ can be written as $S(\varrho\vert\vert\sigma)=S({\bf \Delta}[\varrho])-S(\varrho)+S({\bf \Delta}[\varrho]\vert\vert\sigma)$. The product state closest to ${\bf \Delta}[\varrho]$ is $ \pmb{ \pi}[{\bf \Delta}[\varrho]]$ \cite{modi}, or, using \textbf{Lemma 1}, ${\bf \Delta}[ \pmb{ \pi}[\varrho]]$. Then, if $\sigma$ is also a product state, $S(\varrho\vert\vert\sigma)\ge  S({\bf \Delta}[\varrho])-S(\varrho)+S({\bf \Delta}[\varrho]\vert\vert{\bf \Delta}[\pmb{ \pi}[\varrho]])$.  Thus, $S(\varrho\vert\vert\sigma)\ge S({\bf \Delta}[ \pmb{ \pi}[\varrho]])-S(\varrho)={\cal M}(\varrho)$, where the equality holds  for $\sigma={\bf \Delta}[ \pmb{ \pi}[\varrho]]$.\qed
\end{proof}

Beyond the geometric definition of the hookup  ${\cal M}$ (\ref{eqm}), a clear physical interpretation
can be given observing that it can be decomposed as the sum of two terms, one of them associated to multipartite correlations 
[${\cal T} (\varrho)$] and the other one being, according to Eq. (\ref{lc}), the local coherence of the state ${\cal C}_L(\varrho)$ [see Fig. \ref{Fig1}(a)]
\begin{equation}
{\cal M}(\varrho)={\cal T} (\varrho)+{\cal C}_L(\varrho).\label{eqm1}
\end{equation}
In other words,  $\cal M$ is able to capture the resources of both correlations across the multipartite system and of the local coherences [Eq. (\ref{eqm1})]. 
Using Eq. (\ref{elle}) we can also decompose the hookup as
\begin{equation}\label{elle2}
{\cal M}(\varrho)={\cal D} (\varrho)+{\cal J} (\varrho)+{\cal C}_L(\varrho)-L(\varrho).
\end{equation}

Exploiting \textbf{Lemma 1}, a different decomposition of ${\cal M}$ can be given where the total coherence appears explicitly. In fact, we have [see Fig. \ref{Fig1}(a)]
\begin{equation}
{\cal M}(\varrho)={\cal C}(\varrho)+{\cal K} (\varrho),\label{eqm2}
\end{equation}
where  also the totally classical correlations ${\cal K} (\varrho)\equiv {\cal T} ({\bf \Delta}[\varrho])$ appear.  
The quantity ${\cal K} (\varrho)$, measuring total correlations of the diagonal ensemble, naturally emerges in our unified framework. 
Interestingly it has already been proved to play a relevant role in the context of many-body localization and quantum ergodicity \cite{gooldprb}. 
It quantifies the amount of information that survives to total dephasing and is given by the classical mutual information of $ \varrho $. 
It is a purely classical object, as it obtained by eliminating both quantum correlations and coherence. Thus, we will call ${\cal K}$ \textit{irreducible} classical information. 
It can also be written as ${\cal K}={\cal T}-{\cal C}_M$ [see (\ref{eqcm})].

Equation (\ref{eqm2}) redefines a different separation between the classical and the quantum part of resources with respect to (\ref{eqm1}). 
Indeed, the quantum content of the state is all contained in the total coherence ${\cal C}$, which is an upper bound for quantum discord. 
In fact, as already noticed in Ref. \cite{yao}, ${\cal D}(\varrho)$ is the minimum value of coherence calculated over all the possible local unitaries 
$U_{\rm loc}=U_1\otimes U_2\otimes\cdots \otimes U_n$ [see the upper distances in Fig. \ref{Fig1}(b)]. 
In other words, total coherence is minimized in the basis where $\chi_{\varrho}$ is completely incoherent ($\chi_{\varrho}={\overline{\bf \Delta}}[\varrho]$, 
where the bar reminds the special choice of basis adopted here).
This result can be understood observing that incoherent states are a sub-ensemble of the family of classical states.
As a consequence, $\overline{\bf \Delta}[\varrho]$ is the closest state to $\varrho$ which is completely classical, from the point of view of both coherence and correlations.  Then,
in the basis of the eigenstates of $\chi_{\varrho}$, $\chi_{\varrho}\equiv \overline{\bf \Delta}[\varrho]$, and ${\cal D}(\varrho)=\overline{\cal C}(\varrho)$.  
In such a  special basis, classical and irreducible classical correlations become equal as it will be shown below.

While $\cal J$ is expected to measure the amount of correlations that can be described within a classical probability model but does not 
take into account the freedom of changing the reference basis, which is a fully quantum property, $\cal K$ catches both aspects of classicality, that is, the absence of quantum correlations and the lack of coherence. 
The different definitions of $\cal J$ and $\cal K$ suggest a possible hierarchical relationship $\cal J \ge\cal K$. 
Actually, as explicitly shown in an example in Appendix \ref{Appendix}, this is not true. 
This is related to the mentioned lack of closure of Modi's scheme of correlations \cite{modi} [see Eq. (\ref{elle})], which may cause a bad estimation of classicality.

Interestingly, the excess term $L(\varrho)$  admits a clear physical interpretation within our framework. 
Indeed, let us consider the case  where $\varrho$ is written in the basis of the eigenstates of  $\chi_{\varrho}$. 
In that basis, according to what said before, $\chi_{\varrho}=\overline{\bf \Delta}[\varrho]$. Consequently, by applying \textbf{Lemma 1} and Eq. (\ref{elle}),
\begin{equation}\label{LcL}
 L(\varrho)=S(\overline{\bf \Delta}[ \pmb{ \pi}[\varrho]])-S(\pmb{ \pi}[\varrho])\equiv \overline{\cal C}_L(\varrho),
\end{equation}  
where again the bar reminds the special basis choice here.
This allows one to see that, as anticipated before, in the basis where the classical state $\chi_{\varrho}$
is incoherent, the hookup turns out to be the sum of discord and classical correlations  ${\cal M}(\varrho)={\cal D}(\varrho)+{\cal J}(\varrho)$.

The main consequence of Eq. (\ref{LcL}) is that a unified framework of correlations and coherences allows for 
a physical interpretation of $L$ as the local coherence of $\varrho$ calculated in the basis where $\chi_{\varrho}$ can be considered \textit{sensu stricto} classical. 
It is always possible to find a (classical) basis where ${\cal C}_L=0$ (such a basis is just the basis of the eigenvectors of $\pmb{ \pi}(\varrho)$). Then, a nonvanishing value of $L$ implies that the basis that minimizes the total coherence is a different one. It can be interpreted as a  basis mismatch measure and represents the lack of completeness of the correlation framework, as it tries to quantify quantumness by omitting the conceptually fundamental component of local coherence.

In order to understand the consequences of Eq. (\ref{eqm2}), let us discuss a simple example and consider the two-qubit state 
$\rho=\frac{1}{2}|\Phi^+\rangle\langle\Phi^+|+\frac{1}{4}|01\rangle\langle 01|+\frac{1}{4}|10\rangle\langle 10|$, where $|\Phi^+\rangle=(|00\rangle+|11\rangle)/\sqrt{2}$
in the computational basis. The closest  incoherent state is the identity operator $\mathbbm{I}/4$, which is obviously completely uncorrelated. This implies that for such a state 
$\mathcal{M}=\mathcal{C}=\mathcal{C}_M=0.5$, while $\mathcal{K}=0$. Thus, all the usable resources are contained in the coherences of the Bell state and have a purely quantum nature, 
while there is no classical contribution to them. The state has finite classical correlations  $\mathcal{J}(\rho)\simeq 0.19$, obtained by applying the decohering operator in the rotated
$x$ basis, which gives $\chi_\rho=\mathbbm{I}/4+(|00\rangle\langle 11|+|10\rangle\langle 01|+h.c.)/8$, but all these correlations are due to the initial presence of multipartite coherence. In fact, also the 
coherence of $\chi_\rho$ is genuinely multipartite: $\mathcal{J}(\chi_\rho)=\mathcal{C}(\chi_\rho)=\mathcal{C}_M(\chi_\rho)$.
We point out that this example shows that the coherence of a classically correlated state can be (even completely) multipartite. 
Furthermore, discord (here $\mathcal{D}(\rho)\simeq 0.31$) can be present even for vanishing irreducible classical correlations, at difference from the classical ones \cite{mdms}.

Previously, we have commented on the lack of hierarchy between $\cal K$ and $\cal J$. As a complementary aspect, the same lack of hierarchy takes place between $L$ and ${\cal C}_L$. The case ${\cal C}_L\ge L$ can be found considering for instance any pure state, as $L=0$
(in fact, the optimal dephasing basis to find $\chi$ is always the basis of the eigenstates of the state itself).  A case where this ordering relationship is violated can be found by considering the state $\upsilon=\frac{8}{27}|000  \rangle\langle 000 |+\frac{12}{27}|W  \rangle\langle W |+\frac{6}{27}|\bar{W}  \rangle\langle \bar{W}  |+\frac{1}{27}| 111 \rangle\langle  111|$ \cite{wei}, where $|\bar{W}  \rangle=\frac{1}{\sqrt{3}}(|011\rangle+|110\rangle+|101\rangle)$. In fact, we have $L(\upsilon)=0.24$ \cite{modi}. It can immediately checked that the local coherence vanishes in the computational basis and it can also be shown that ${\cal C}_L(\upsilon)\le L(\upsilon)$ for any local basis change.

Finally, let us mention that, apart from the relationship between discord and coherence, which are both indicators of overall quantumness, it is also possible to establish a similar one between the purely multipartite coherence ${\cal C}_M(\varrho_{A_1,\dots,A_n})$ and an indicator of genuine quantum correlations, the so called global discord ${\cal G}(\varrho_{A_1,\dots,A_n})$ \cite{global}, defined as ${\cal G}(\varrho_{A_1,\dots,A_n})=\min\limits_{\{\Phi^i\}}{\cal G}_{\{\Phi^i\}}(\varrho_{A_1,\dots,A_n})$, with ${\cal G}_{\{\Phi^i\}}(\varrho_{A_1,\dots,A_n}) = S(\varrho_{A_1,\dots,A_n}||\Phi^i(\varrho_{A_1,\dots,A_n}))-\sum_k S(\varrho_{A_k}||\Phi^i_{A_k}(\varrho_{A_k}))$, 
where $\Phi^i =\otimes_{j=1}^n\Phi_{A_j}^{i_j}$ is the dephasing operator in a classical basis and where
 $\Phi^i_{A_k}(\varrho_{A_k})=\sum_i |i_k\rangle\langle i_k|\varrho_{A_k} |i_k\rangle\langle i_k|$. It is immediate to note that the global discord  represents  a lower bound for   the multipartite coherence:
\begin{equation}
{\cal G}(\varrho)=\min_{U_{\rm loc}} {\cal C}_M(\varrho),
\end{equation} 
 where the minimum is taken over the set local unitaries.



\section{Conclusions}

To summarize, when it comes to characterize a quantum state, the (total) mutual 
information is not necessarily an adequate indicator, for it fails to take into 
account the coherence properties of the state itself. This is the reason why the 
puzzling term $L(\varrho)$ comes out in the relative entropy framework. Such a 
term can be taken as a witness of the fact that the local coherence and the 
global one are minimized in different bases. 
This seemingly side observation actually reveals how much a unified framework is 
needed to build a consistent theory.

In the approach proposed here, quantum coherence and multipartite correlations 
cannot be thought as distinct labels, as they both contribute to hallmark the state, 
 providing a full description of its quantumness. 
We have introduced the hookup ${\cal M}$ and shown that it amounts to the sum of 
coherence and irreducible classical information resources or equivalently to the sum of local coherence and total correlations. 
Such a comprehensive hallmark 
has different conceivable applications, as it fully determines the power of a 
quantum state. As previously  mentioned, the hookup ${\cal M}$ can be used to 
measure the amount of work necessary to erase such correlations and has obvious 
thermodynamic implications that can be further explored, for instance in the 
field of ergotropy.
 The interplay between quantum local and distributed contributions could also be 
employed in computing tasks, as the  algorithmic performances are studied by 
analyzing either the single-qubit power or the presence of correlations as 
entanglement or discord.  Another field where correlations and coherence are 
separately essential resources is quantum metrology, where our approach can be 
used to find the optimal quantum advantage and tighter bounds.
A possible extension of our framework could concern the use of entanglement to quantify the multipartite quantumness instead of discord. In this case, the main obstacle is represented by the difficulty to define the closest unentangled state in terms of elementary operation, as the partial trace or the total dephasing.
Finally,  a further line of research that a unified framework can open concerns 
the possibility of converting different types of resources among them 
\cite{ma,qiao}.

\begin{acknowledgments}
Valuable discussions with Fernando Galve, Francesco Plastina, Alex Streltsov, and Manuel Gessner are kindly acknowledged.  
This work was supported by the EU through the H2020 Project QuProCS (Grant Agreement 641277), by MINECO/AEI/FEDER through projects  EPheQuCS FIS2016-78010-P,  by the Mar\'{i}a de Maeztu Program for Units of Excellence in R\text{$\&$}D (MDM-2017-0711),  and by  the Conselleria d’Innovaci\'{o}, Recerca i Turisme del Govern de les Illes Balears.
\end{acknowledgments}

\appendix

\section{Comparison between $\mathcal{K}$ and $\mathcal{J}$}\label{Appendix}

 As an instructive example of the lack of hierarchy between $\mathcal{K}$ and $\mathcal{J}$, let us consider the maximally discordant mixed state $\varrho_{MD}=\epsilon\vert \Phi^{+}  \rangle\langle  \Phi^{+}   \vert+ (1-\epsilon)+\vert 10  \rangle\langle  10 \vert$ (the choice is suggested by the fact that this state is known to have low classical correlations compared to discord) \cite{mdms}, together with the whole family of states obtained by applying local unitaries. Such a family is given by 
$\tilde\varrho_{MD}=U \varrho_{MD}U^{\dag}$, where the most general form for $U$ is $U=U_1\otimes U_2$ with
\begin{equation}
U_j=\left(
\begin{array}{cc}
\cos\theta_j& e^{i \phi_j}\sin\theta_j\\
-e^{-i \phi_j}\sin\theta_j&\cos\theta_j
\end{array}
\right).
\end{equation}
The state is symmetric under the exchange of the two qubits up to the spin-flip operation. Such a symmetry is reflected into the extremal values assumed by the angles $\theta_j$ in optimal measurements \cite{chiribella},
as calculating the von Neumann entropy of $\tilde\varrho_{MD}$ amounts to calculating the entropy of $\varrho_{MD}$ in a rotated basis obtained by applying the unitary $U^{\dag}$ to the set of computational states. Thus, the set of optimal values is restricted to $\theta_1=\theta_2\equiv\theta$ and $\phi_1=-\phi_2$. It can be further shown that the result is independent on the phases, provided that   $\phi_1=-\phi_2$, and then they can be fixed to zero.

  Depending on the value of $\epsilon $,  the closest classically correlated state $ \chi_{\varrho_{MD}
 }$ is obtained by dephasing $\varrho_{MD}$ either in the computational basis for $\epsilon< \epsilon^{\prime}$ or in a rotated basis for $\epsilon> \epsilon^{\prime} $, where the threshold is given by $\epsilon^{\prime}=2/3$. Above a second threshold, for  $\epsilon>\epsilon^{\prime\prime}\simeq 0.76$, the optimal basis is the $x$-basis. On the other hand,  ${\cal K}$ reaches its maximum in the $x$-basis irrespective of $\epsilon$.  This means that, for $\epsilon< \epsilon^{\prime\prime}$, ${\cal K}(\tilde\varrho_{MD})$ can be either bigger or smaller than ${\cal J}(\varrho_{MD})$ depending on the unitary chosen, while ${\cal K}(\tilde\varrho_{MD})\le{\cal J}(\varrho_{MD})$ for $\epsilon > \epsilon^{\prime\prime}$. As for the hookup ${\cal M}$, it always reaches its maximum in the $x$-basis and its minimum in the computational basis.

\begin{figure}[t]
  \centering
  \includegraphics[width=0.47 \textwidth]{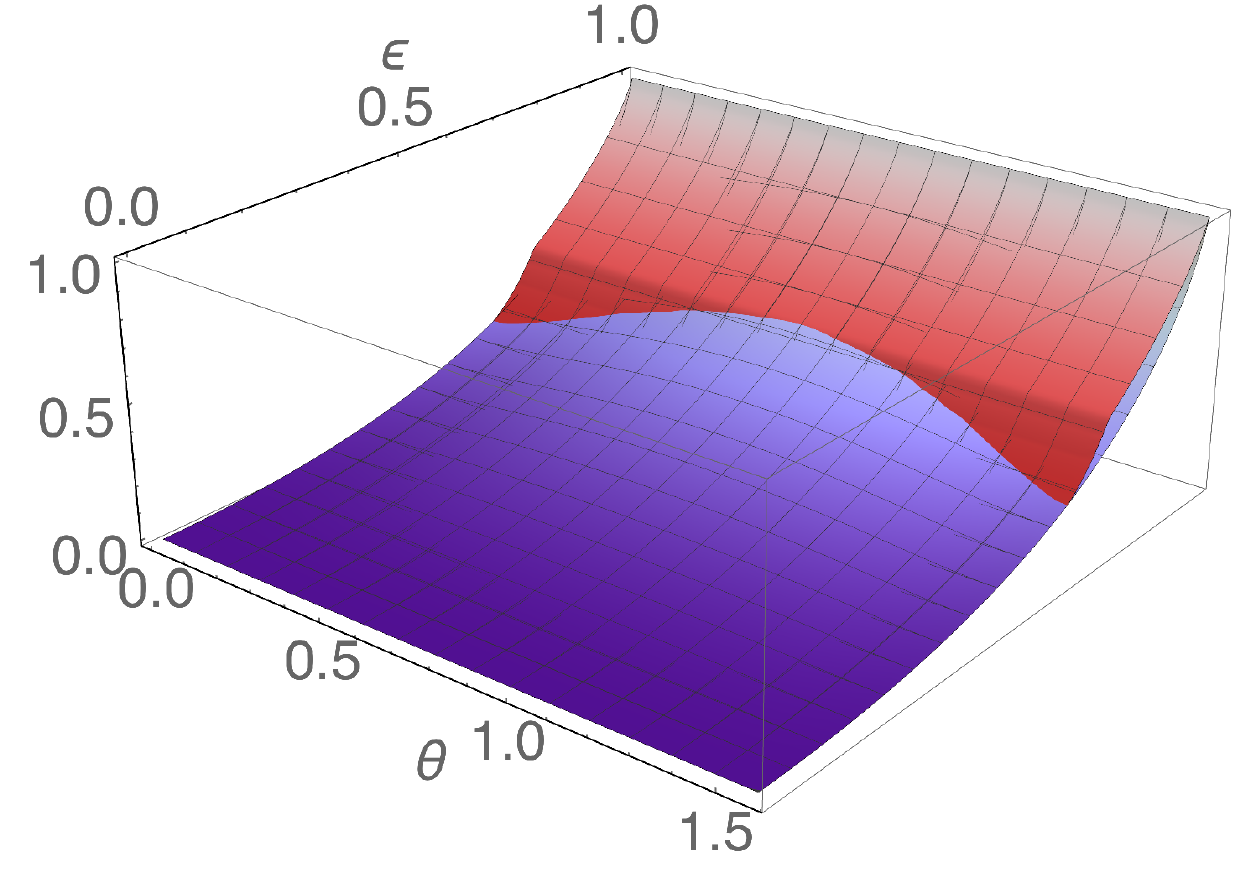}\\
  \caption{  
  Comparison between ${\cal J}$ (red) and ${\cal K}$  (blue) for the family $\tilde\varrho_{MD}$ as a function of $\theta$ and $\epsilon$.  ${\cal J}$ is an upper bound for ${\cal K}(\theta)$ only in the region $\epsilon^{\prime\prime}<\epsilon<1$ (see text).
    }\label{Fig2}
\end{figure}
In Fig. \ref{Fig2}, we compare ${\cal J}$ and ${\cal K}$. 
The analytical values of $\epsilon^{\prime}$ and  $\epsilon^{\prime\prime}$ can be obtained by solving, respectively, the equations 
\begin{eqnarray}
&&\lim_{\theta\to 0}\frac{\partial^2 S(\tilde\varrho_{MD})}{\partial\epsilon^2}=0,\\
&&\lim_{\theta\to \pi/4}\frac{\partial^2 S(\tilde\varrho_{MD})}{\partial\epsilon^2}=0.
\end{eqnarray}

\bibliographystyle{plain}

\end{document}